\newtheorem{theorem}{Theorem}[section]
\newtheorem{conjecture}{Conjecture}
\theoremstyle{definition}
\newtheorem{definition}{Definition}
\def\fro{\textnormal{F}}
\def\circ{\textnormal{circ}}
\def\diag{\textnormal{diag}}
\def\ifft{\textnormal{ifft}}
\def\fft{\textnormal{fft}}
\begin{document}

\title{Three-dimensional  Signal Processing: A New Approach in Dynamical Sampling via Tensor Products}
\name{Yisen Wang$^{1}$, HanQin Cai$^{2}$,  Longxiu Huang$^{1}$}

\address{$^{1}$ Michigan State University\\$^{2}$ University of Central Florida
}
\date{}
\maketitle

\begin{abstract}
The dynamical sampling problem is centered around reconstructing signals that evolve over time according to a dynamical process, from spatial-temporal samples that may be noisy. This topic has been thoroughly explored for one-dimensional signals. Multidimensional signal recovery has also been studied, but primarily in scenarios where the driving operator is a convolution operator. In this work, we shift our focus to the dynamical sampling problem in the context of three-dimensional signal recovery, where the evolution system can be characterized by tensor products. Specifically, we provide a necessary condition for the sampling set that ensures successful recovery of the three-dimensional signal. Furthermore, we reformulate the reconstruction problem as an optimization task, which can be solved efficiently.  To demonstrate the effectiveness of our approach, we include some straightforward numerical simulations that showcase the reconstruction performance.
\end{abstract}

\section{Introduction}
The concept of dynamical sampling, first introduced in the works of \cite{lu2009spatial}, 
addresses the challenge of compensating for spatial sampling deficiencies by leveraging the temporal evolution of data during recovery \cite{aldroubi2013dynamical}. This method utilizes the time-dependent nature of signal evolution, driven by external forces, to enhance the quality of the collected samples, 
setting it apart from traditional static sampling.

Dynamical sampling allows for efficient data acquisition by sampling only a subset of data points in time and space. This is particularly useful in systems where collecting data from every point in space or at every moment in time is either costly or impractical. For instance, in large sensor networks or medical imaging, dynamical sampling can reduce the number of measurements needed while still allowing accurate reconstruction of the full signal.

Dynamical sampling has been extensively explored for one-dimensional signals, with studies like \cite{aldroubi2017dynamical, aldroubi2019frames, aldroubi2017iterative,huang2024robust}.  Particularly, for the scenario where the evolution operator is represented by a matrix \(A \in \mathbb{C}^{d \times d}\) and the signal, \(f \in \mathbb{C}^d\), is to be recovered,  \cite{aldroubi2017dynamical} provides necessary and/or sufficient conditions on the sampling set of indices \(\Omega \subseteq \{1, 2, \ldots, d\}\) and  the numbers $\{\ell_i\}_{i\in\Omega}$ such that $f\in\mathbb{C}^d$ can be recovered from the samples $\{A^jf(i):i\in\Omega,j=0,\cdots,\ell_i-1\}$. 

Although one-dimensional signals have been extensively studied, research on multi-dimensional signals remains relatively limited.  However, in industrial applications, the observed time-varying signals often involve multiple variables, highlighting the critical importance of studying multi-dimensional dynamical sampling. For example, in sensor networks used for environmental monitoring or industrial processes, data such as temperature, pressure, and humidity are collected over time across various spatial locations, forming a three-dimensional tensor \cite{vairamani2013}. Each dimension can represent spatial coordinates and time, highlighting the complexity of the data. To date, the primary research in this area has focused on signals evolving under convolution-driven operators in multi-dimensional settings \cite{aceska2015multidimensional}. This underscores a significant gap in the literature and highlights the need for further investigation into multi-dimensional dynamical sampling. 


In this work, we explore the dynamical sampling problem where the initial signal \( \mathcal{F} \) is in \( \mathbb{C}^{m \times p \times n} \), and evolves over time driven by the t-product of the tensor \( \mathcal{A} \in \mathbb{C}^{m \times m \times n} \).  Specifically,  the signal at time \( t \) is transformed according to:
\begin{equation}\label{eqn:evolution rule}
    \mathcal{F}_t = \mathcal{A}^t \ast \mathcal{F},
\end{equation}
where \( \ast \) denotes the t-product between two tensors \cite{kilmer2011factorization}, and \( \mathcal{A}^t \) represents the \( t \)-th power of \( \mathcal{A} \) under the t-product. We consider the spatio-temporal sampling data represented by the set \( \Psi = \{\mathcal{F}_t(i,j,k) : (i,j,k) \in \Omega\subseteq[m]\times[p]\times[n], t \in \{0\}\cup[T-1]\} \) with $[m]=\{1,2,\cdots, m\}$. The objective of this study is to identify the conditions on \( \Omega \) and \( T \) necessary to guarantee the reconstruction of the initial signal \( \mathcal{F} \) from \( \Psi \), and to formulate the reconstruction of \( \mathcal{F}\) as an optimization problem that can be efficiently solved.
 
    \subsection{Contributions}
Our main contributions are as follows:
\begin{itemize}
    \item We have established a necessary condition on $\Omega$ for the successful recovery of the initial three-dimensional signal from the given samples.
    \item We have transformed the reconstruction of the three-dimensional signal $\mathcal{F}$ from spatio-temporal samples into $p$ independent optimization problems.
    \item We have conducted several experiments to demonstrate the effectiveness of our method, determine the optimal total sampling time 
$T$, and verify our conjecture.
\end{itemize}

\section{Preliminaries}
In this section, we introduce the mathematical notations and concepts required for our study, with a focus on the t-product and other tensor operations.
\begin{definition}[t-product] The t-product of tensors $\mathcal{T}_1\in\mathbb{C}^{m\times p\times n}$ and  $\mathcal{T}_2\in\mathbb{R}^{p\times q\times n}$ is denoted by $\mathcal{T}_1\ast\mathcal{T}_2=:\mathcal{T}\in\mathbb{C}^{m\times q\times n}$  and can be defined by the following steps:\\
    \begin{itemize}
    
\item  $\widehat {\mathcal{T}_1 }= \fft(\mathcal{T}_1,[], 3)$, $\widehat {\mathcal{T}_2 }= \fft(\mathcal{T}_2,[], 3)$ 
\item $\widehat {\mathcal{T}} (:,:,k)$=$\widehat {\mathcal{T}_1} (:,:,k)\widehat {\mathcal{T}_2} (:,:,k)$
\item $\mathcal{T}:=\mathcal{T}_1\ast \mathcal{T}_2=\ifft(\widehat{\mathcal{T}},[],3)$.
\end{itemize}
\end{definition}
Apart from t-product, we also involve other products between the tensors for the signal recovery.
\begin{definition}\label{def:others}
Other products between tensor:
\begin{itemize}
    \item  Element-wise tensor product $\odot$: $\mathcal{T}=\mathcal{T}_1\odot \mathcal{T}_2$ for $\mathcal{T}, \mathcal{T}_1, \mathcal{T}_2 \in \mathbb{C}^{m\times p\times n}$, with $[\mathcal{T}]_{i,j,k}=[\mathcal{T}_1]_{i,j,k}[\mathcal{T}_2]_{i,j,k}$.
    \item  Tube-wise circular convolution $\circledast$: $\mathcal{T}=\mathcal{T}_1\circledast \mathcal{T}_2$ for $\mathcal{T}, \mathcal{T}_1, \mathcal{T}_2 \in \mathbb{C}^{m\times p\times n}$, with $[\mathcal{T}]_{i,j,:}=[\mathcal{T}_1]_{i,j,:}\ast [\mathcal{T}_2]_{i,j,:}$.
    \item  Frontal-slice-wise product $\bigtriangleup$: $\mathcal{T}=\mathcal{T}_1\bigtriangleup \mathcal{T}_2$ for $\mathcal{T}_1 \in \mathbb{C}^{m\times n\times p}$, $\mathcal{T}_2 \in \mathbb{C}^{n\times s\times p}$, $\mathcal{T} \in \mathbb{C}^{m\times s\times p}$, with $[\mathcal{T}]_{:,:,k}=[\mathcal{T}_1]_{:,:,k}[\mathcal{T}_2]_{:,:,k}$.
\end{itemize}
\end{definition}

\section{Main results}
\subsection{Necessary condition}
We have initially focused on the sampling set \(\Omega\) structured in a lattice form, specifically \(\Omega = I \times J \times [n]\), where \(I \subseteq [m]\) and \(J \subseteq [p]\). In this configuration, we present the following result:

\begin{theorem}
   Suppose \( \mathcal{F} \in \mathbb{C}^{m \times p \times n} \) and \( \mathcal{A} \in \mathbb{C}^{m \times m \times n} \). And suppose that the signal at time \( t \) follows the transformation specified in \eqref{eqn:evolution rule}. Then, the recovery of \( \mathcal{F} \) from \( \Psi \), with \( \Omega = I \times J \times [n] \) where \( I \subseteq [m] \) and \( J \subseteq [p] \), is not possible if \( J \neq [p] \).
\end{theorem}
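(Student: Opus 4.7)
The plan is to pass the whole problem to the Fourier domain along the third mode and exploit the fact that, after this transformation, the $p$ ``lateral slices'' of the signal evolve under completely decoupled dynamics. Since $\Omega$ contains the full third axis $[n]$, the sample set in the original domain is in bijection with the sample set in the Fourier domain, so no information is lost by switching.

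First, I will apply an FFT along the third dimension to the evolution rule~\eqref{eqn:evolution rule}. By the definition of the t-product,
\begin{equation*}
\widehat{\mathcal{F}_t}(:,:,k) \;=\; \widehat{\mathcal{A}}(:,:,k)^{\,t}\,\widehat{\mathcal{F}}(:,:,k), \qquad k \in [n],\ t\ge 0 .
\end{equation*}
Reading this column by column in $j$ gives
\begin{equation*}
\widehat{\mathcal{F}_t}(:,j,k) \;=\; \widehat{\mathcal{A}}(:,:,k)^{\,t}\,\widehat{\mathcal{F}}(:,j,k),
\end{equation*}
so the mode-$2$ slice $\widehat{\mathcal{F}}(:,j,:)$ governs $\widehat{\mathcal{F}_t}(:,j,:)$ alone, with no coupling to the slices indexed by $j'\neq j$.

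Next I will translate the observation model. Because $\Omega = I\times J\times[n]$ includes the entire third axis, the observed data
$\{\mathcal{F}_t(i,j,k):(i,j,k)\in\Omega,\ t\in\{0\}\cup[T-1]\}$
determines, and is determined by, the Fourier-side samples
$\{\widehat{\mathcal{F}_t}(i,j,k):(i,j,k)\in I\times J\times[n],\ t\in\{0\}\cup[T-1]\}$
via the one-to-one map $\fft(\cdot,[],3)$. Combined with the decoupling above, this means the measurements are a function of $\{\widehat{\mathcal{F}}(:,j,:):j\in J\}$ only; the slices $\{\widehat{\mathcal{F}}(:,j,:):j\in [p]\setminus J\}$ do not enter at all.

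To finish, I will construct an explicit ambiguity when $J\neq[p]$. Pick any $j_0\in[p]\setminus J$, take any $\mathcal{F}\in\mathbb{C}^{m\times p\times n}$, and define $\mathcal{F}'$ by setting $\widehat{\mathcal{F}'}(:,j,:) = \widehat{\mathcal{F}}(:,j,:)$ for $j\neq j_0$ and $\widehat{\mathcal{F}'}(:,j_0,:) \neq \widehat{\mathcal{F}}(:,j_0,:)$, then returning to the physical domain by \ifft{} along mode $3$. Then $\mathcal{F}\neq\mathcal{F}'$ but, by the decoupling, $\mathcal{F}_t(i,j,k)=\mathcal{F}'_t(i,j,k)$ for every $(i,j,k)\in\Omega$ and every $t$, so recovery is impossible.

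\paragraph{Main obstacle.}
There is no deep analytic difficulty; the only delicate point is keeping the bookkeeping straight across the three distinct indices and making it explicit that the t-product, once diagonalized by the mode-$3$ FFT, performs matrix multiplication on the first index while acting identity-like on the second index. Once that structural observation is cleanly stated, the impossibility follows from a transparent dimension-counting counterexample rather than from any quantitative estimate.
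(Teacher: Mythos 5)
Your proposal is correct and follows essentially the same route as the paper: diagonalize the t-product by the mode-3 FFT, observe that the lateral slices indexed by $j$ evolve and are sampled independently, and conclude that the slices with $j\in[p]\setminus J$ never enter the measurements. Your explicit two-signal ambiguity construction at the end is a slightly more rigorous finish than the paper's, which merely asserts that the missing slices ``cannot be reconstructed,'' but it is the same underlying argument.
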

\begin{proof}
    Given that \(\Omega = I \times J \times [n]\), the samples at time \(t\) can be represented as:
\begin{equation}\label{eqn:samplesatt}
    \mathcal{Y}_t = [\mathcal{I}_m]_{I,:,:} \ast \mathcal{F}_t \ast [\mathcal{I}_p]_{:,J,:}
\end{equation}
where \(\mathcal{I}_m\) is the \(m \times m \times n\) identity tensor \cite{kilmer2011factorization}. Utilizing the properties of the t-product and applying the discrete Fourier transformation on the third dimension of both sides of \eqref{eqn:samplesatt}, we obtain:
\begin{equation}\label{eqn:samplesInfrequency}
    [\widehat{\mathcal{Y}}_t]_{:,:,k} = [\mathbb{I}_m]_{I,:} \widehat{\mathcal{A}}_{:,:,k}^t [\widehat{\mathcal{F}}]_{:,J,k}
\end{equation}
where $\mathbb{I}_m$ stands for the $m\times m$ identity matrix. 
Given that the Fourier transformation is a unitary transformation, reconstructing \(\mathcal{F}\) from \(\Psi\) is equivalent to reconstructing \(\widehat{\mathcal{F}}\) from \(\widehat{\Psi} = \{\widehat{\mathcal{Y}}_t : t \in 0 \cup [T-1]\}\).

From \eqref{eqn:samplesInfrequency}, it is evident that the reconstructions of \([\widehat{\mathcal{F}}]_{:,j,k}\) are independent for different \(j \in J\). This implies that if \(J \neq [p]\), there will be some \(j \in [p] \setminus J\) for which \([\widehat{\mathcal{F}}]_{:,j,k}\) cannot be reconstructed from \(\widehat{\Psi}\). Consequently, \(\widehat{\mathcal{F}}\) cannot be fully reconstructed from \(\widehat{\Psi}\). The result of this theorem is thus established.
\end{proof}
Based on this result, we propose the following conjecture, which we intend to explore in our future work:
\begin {conjecture}\label{conj1}
A necessary condition for the successful recovery of \( \mathcal{F} \) from \( \Psi \) is that \( \bigcup_{(i,j,k)\in\Omega} \{j\} = [p] \).
\end {conjecture}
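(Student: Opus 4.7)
The plan is to prove this necessary condition by a direct lateral-slice decoupling argument, generalizing the preceding theorem without relying on the lattice structure of $\Omega$. The key observation is that a t-product with a left factor in $\mathbb{C}^{m\times m\times n}$ respects the second index of the signal: each lateral slice $\mathcal{F}_{:,j,:}$ evolves independently, so samples at positions whose second coordinate equals $j$ can only reveal information about that particular slice.

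First I would record the structural lemma
\begin{equation*}
    [\mathcal{A}^t \ast \mathcal{F}]_{:,j,:} \;=\; \mathcal{A}^t \ast \mathcal{F}_{:,j,:}, \qquad t\ge 0,\ j\in[p].
\end{equation*}
This is immediate from the frequency-domain definition of the t-product: $\widehat{\mathcal{A}^t \ast \mathcal{F}}_{:,:,k} = \widehat{\mathcal{A}}_{:,:,k}^{\,t}\,\widehat{\mathcal{F}}_{:,:,k}$, and ordinary matrix multiplication on the right acts on columns independently, so extracting the $j$-th column commutes with the product. Inverting the FFT along the third mode then yields the identity at the tensor level.

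Next I would argue by contrapositive. Set $J^{\ast} := \bigcup_{(i,j,k)\in\Omega}\{j\}$ and suppose $J^{\ast} \neq [p]$; pick $j_0 \in [p]\setminus J^{\ast}$. For every index $(i,j,k)\in\Omega$ one has $j\neq j_0$, and by the lemma the sampled value $[\mathcal{F}_t]_{i,j,k}$ is a function of $\mathcal{A}$ and of $\mathcal{F}_{:,j,:}$ alone, never of $\mathcal{F}_{:,j_0,:}$. Hence, starting from any $\mathcal{F}$ and any nonzero perturbation $\mathcal{G}\in\mathbb{C}^{m\times 1\times n}$, the tensor $\mathcal{F}'$ obtained by adding $\mathcal{G}$ to the $j_0$-th lateral slice of $\mathcal{F}$ satisfies $\mathcal{F}'\neq\mathcal{F}$ yet produces the same sample set $\Psi$. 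Unique recovery is therefore impossible, and the conjecture follows.

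I do not foresee a serious obstacle: the argument is essentially algebraic, and only the column-wise identity requires justification, which can be handled either uniformly in the frequency domain as above or by a short induction on $t$ using associativity of $\ast$. The same decoupling also explains why the proof of the lattice case hinges exclusively on $J$ rather than on $I$ or the third coordinate, suggesting that the conjecture is in fact provable by a mild strengthening of the existing theorem's proof rather than requiring new tools.
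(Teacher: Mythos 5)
Your argument is correct, and it is worth emphasizing that the paper does \emph{not} prove this statement: it is posed as a conjecture and supported only by the simulations of Section 3.3.4, while the formal proof given in the paper covers only the lattice case $\Omega = I\times J\times [n]$. Your route is a genuine strengthening of that proof rather than a restatement of it. The paper's argument passes to the frequency domain, uses the identity $[\widehat{\mathcal{Y}}_t]_{:,:,k} = [\mathbb{I}_m]_{I,:}\widehat{\mathcal{A}}_{:,:,k}^t[\widehat{\mathcal{F}}]_{:,J,k}$, and observes that the columns $[\widehat{\mathcal{F}}]_{:,j,k}$ decouple across $j$ --- but this formulation of the samples is only available when $\Omega$ has product structure. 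You instead isolate the underlying mechanism as the slice-level identity $[\mathcal{A}^t\ast\mathcal{F}]_{:,j,:} = \mathcal{A}^t\ast\mathcal{F}_{:,j,:}$ (valid because $\fft(\cdot,[],3)$ and $\ifft(\cdot,[],3)$ act tube-wise, hence commute with extracting a lateral slice, and facewise matrix multiplication acts columnwise on the right factor; equivalently one can argue via $\textnormal{bcirc}(\mathcal{A})$ acting on the columns of the unfolding, or by induction on $t$). This makes every sample with second index $j$ a function of $\mathcal{A}$ and the single slice $\mathcal{F}_{:,j,:}$, so for arbitrary $\Omega$ with $\bigcup_{(i,j,k)\in\Omega}\{j\}\neq[p]$ the explicit perturbation of the missing slice $j_0$ produces a distinct signal with identical spatio-temporal samples, and non-uniqueness follows. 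The only point to state carefully is that $\mathcal{F}_{:,j,:}$ must be read as an $m\times 1\times n$ tensor for the t-product on the right-hand side of your lemma to be well-defined; with that convention the proof is complete and upgrades \Cref{conj1} from a conjecture to a theorem that subsumes the paper's Theorem 3.1.
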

Although we do not provide a formal theoretical proof for this conjecture here, we conducted simulations to test our hypothesis. The results indicate that losing any vertical index from the second dimension leads to a failure in recovery, thereby supporting the conjecture.
\subsection{Method development for signal recovery}\label{sec:alg}
We now focus on reconstructing $\mathcal{F}$ from $\Psi$. If the samples sufficiently guarantee the reconstruction of $\mathcal{F}$, then the following optimization problem will yield a unique solution:
\begin{equation}\label{eqn:DS2opt}
\min_{\mathcal{X}}\sum_{t=0}^{T-1}\sum_{(i,j,k)\in\Omega}\|[\mathcal{A}^t\ast \mathcal{X}]_{i,j,k}-[\mathcal{F}_t]_{i,j,k}\|_{\fro}^2.
\end{equation}
For clarity, we introduce the following definitions:
Let $\mathcal{P}_{\Omega}(\cdot)$ denote the projection of a tensor onto the observed set $\Omega$ such that
\[
[\mathcal{P}_{\Omega}(\mathcal{T})]_{i,j,k}=\begin{cases}
    \mathcal{T}_{i,j,k}, & \text{if } (i,j,k)\in\Omega\\
    0, & \text{otherwise}
\end{cases}.
\]
According to \Cref{def:others}, we can reformulate \eqref{eqn:DS2opt} as:
\begin{equation}\label{eqn:optProj}
\min_{\mathcal{X}}\sum_{t=0}^{T-1} \|\mathcal{P}_{\Omega}(\mathcal{A}^t\ast \mathcal{X})-\mathcal{P}_{\Omega}(\mathcal{F}_t)\|_{\fro}^2.
\end{equation}
Consider that
\begin{equation}
\begin{aligned}
\mathcal{P}_{\Omega}(\mathcal{F}_t) &= \mathcal{P}_{\Omega}\odot\mathcal{F}_t \\
&= \ifft(\widehat{\mathcal{P}_{\Omega}}\circledast\widehat{\mathcal{F}_t},[],3)/n.
\end{aligned}
\end{equation}
Drawing on the properties of the t-product and inspired by \cite{liu2019low}, we can convert the least squares minimization problem \eqref{eqn:optProj} into a frequency domain version:
 \begin{equation}\label{eqn:optfrequency}
    \min_{\widehat{\mathcal{X}}\in\mathbb{C}^{m\times p\times n}}\sum_{t=0}^{T-1} \|\widehat{\mathcal{P}_{\Omega}} \circledast(\widehat{\mathcal{A}}^t\bigtriangleup \widehat{\mathcal{X}})/n-\widehat{\mathcal{P}_{\Omega}(\mathcal{F}_t)}\|_{\fro}^2. 
 \end{equation}
This problem can be decomposed into $p$ separate subproblems, one for each $j\in[p]$, where we solve:
\begin{equation}\label{eqn:optfrequency-sub}
\min_{[\widehat{\mathcal{X}}]_{:,j,:} }\sum_{t=0}^{T-1} \|[\widehat{\mathcal{P}_{\Omega}}]_{:,j,:} \circledast(\widehat{\mathcal{A}}^t\bigtriangleup [\widehat{\mathcal{X}}]_{:,j,:})/n-[\widehat{\mathcal{P}_{\Omega}(\mathcal{F}_t)}]_{:,j,:}\|_{\fro}^2.
\end{equation}
The goal is to achieve a minimal value of zero for each subproblem. To facilitate this, we construct the following system for each $t$ and $j$:
\begin{equation}\label{eq:optfre-eq}
A_3(j)A_1(t)x(j)=b(j,t)
\end{equation}
where 
\begin{equation}
x(j)=\begin{bmatrix}[\widehat{\mathcal{X}}]_{:,j,1};\cdots;[\widehat{\mathcal{X}}]_{:,j,n}\end{bmatrix}\in\mathbb{C}^{mn\times 1},
\end{equation}
\begin{equation}
b(j,t)=\begin{bmatrix}[\widehat{\mathcal{P}}_{\Omega}(\mathcal{F}_t)]_{:,j,1};\cdots;[\widehat{\mathcal{P}}_{\Omega}(\mathcal{F}_t)]_{:,j,n}\end{bmatrix}\in\mathbb{C}^{mn\times 1},
\end{equation}
and $A_1(t)$ and $A_3(j)$ are defined as:
\begin{equation}
A_1(t)=\begin{bmatrix}
    [\widehat{\mathcal{A}}]_{:,:,1}^t&&\\
    &\ddots&\\
    &&[\widehat{\mathcal{A}}]_{:,:,n}^t
\end{bmatrix}\in\mathbb{C}^{mn\times mn},
\end{equation}
\resizebox{0.5\textwidth}{!}{$
A_3(j)=\begin{bmatrix}
    \diag([\mathcal{A}_2(j)]_{1,1,:})& \diag([\mathcal{A}_2(j)]_{1,2,:})&\cdots& \diag([\mathcal{A}_2(j)]_{1,n,:})\\
    \diag([\mathcal{A}_2(j)]_{2,1,:})& \diag([\mathcal{A}_2(j)]_{2,2,:})&\cdots& \diag([\mathcal{A}_2(j)]_{2,n,:})\\
    \vdots&\vdots&\ddots&\vdots\\
    \diag([\mathcal{A}_2(j)]_{n,1,:})& \diag([\mathcal{A}_2(j)]_{n,2,:})&\cdots& \diag([\mathcal{A}_2(j)]_{n,n,:})
\end{bmatrix}$}
with $[\mathcal{A}_2(j)]_{:,:,\ell}=\circ([\widehat{\mathcal{P}_{\Omega}}]_{\ell,j,:})$. Solving this system will enable us to recover $x(j)$. Once all $x(j)$ values are obtained, they are combined and reshaped into a tensor $\widehat{\mathcal{X}}_{\textnormal{app}}$ of size $m\times n\times p$. The estimation of $\mathcal{F}$ is then set as $\mathcal{X}=\ifft(\widehat{\mathcal{X}}_{\textnormal{app}},[],3)$.

\subsection{Experiments}

To evaluate the performance of our proposed method, we conducted several simulations aimed at recovering the initial signal. The experiments are structured in three parts: the first part evaluates the overall recovery performance and the point-wise recovery of our algorithm, the second part focuses on determining the optimal value of the parameter $T$, which is crucial for effective signal recovery, the third part verifies the conjecture, demonstrating that to fully recover the signal, the union of the second dimension in our dataset must equal to $p$, i.e., the second dimension of the initial signal.
\subsubsection{Datasets}
To generate the synthetic datasets, we first create a random tensor of size \(20 \times 15 \times 5\) as the initial signal \(\mathcal{F}\) and another tensor of size \(20 \times 20 \times 5\) as the driven operator \(\mathcal{A}\). Using these, we produce signals at different times \(t\) by applying the operation \(\mathcal{F}_{t} = \mathcal{A}^t * \mathcal{F}\).  
Next, we generate another tensor \(\mathcal{P}_{\Omega} \in \{0,1\}^{20 \times 15 \times 5}\) of the same size as \(\mathcal{F}\) to denote the sampled locations. The entries of \(\mathcal{P}_{\Omega}\) are generated using a Bernoulli distribution, where an entry of \(1\) indicates the presence of a sample and \(0\) indicates its absence. The probability of \(1\) in \(\mathcal{P}_{\Omega}\) is set to the sampling rate \(\alpha\). 
Finally, we extract the spatio-temporal samples by generating \(\mathcal{P}_{\Omega}(\mathcal{F}_t)\) at various time points \(t \in \{0\} \cup [T-1]\), capturing samples across different locations and times.

\subsubsection{Recovery accuracy}
In the first set of experiments, we evaluated recovery accuracy. We set the maximum sampling time \( T \) to 5 and repeated the experiment 10 times to assess the stability of the method. As shown in \Cref{fig:recovery performance}, when the sampling rate reached 40\%, the relative error decreased to approximately \( 10^{-12} \), demonstrating the effectiveness of our approach.
\begin{figure}[ht]
    \centering
    \includegraphics[width=0.68\linewidth]{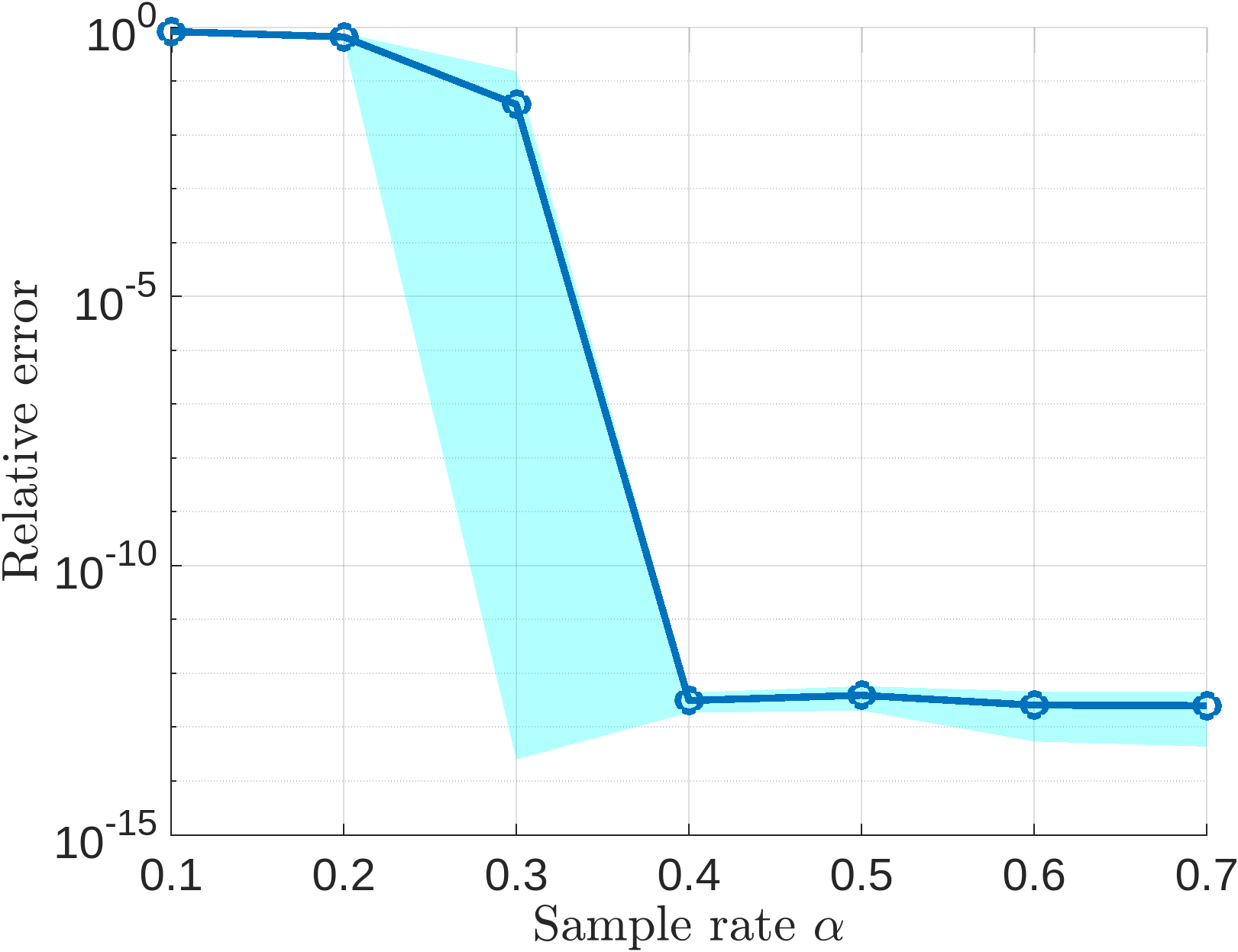}
    \caption{Relative error v.s. sampling rate $\alpha$: we repeated this experiment 10 times, calculating both the mean value and the standard deviation of the results. As shown by the shadow, the standard deviation is relatively small, indicating that our method consistently performs well across different trials.}
    \label{fig:recovery performance}
\end{figure}

\begin{figure}[ht]
    \centering
    \includegraphics[width=0.68\linewidth]{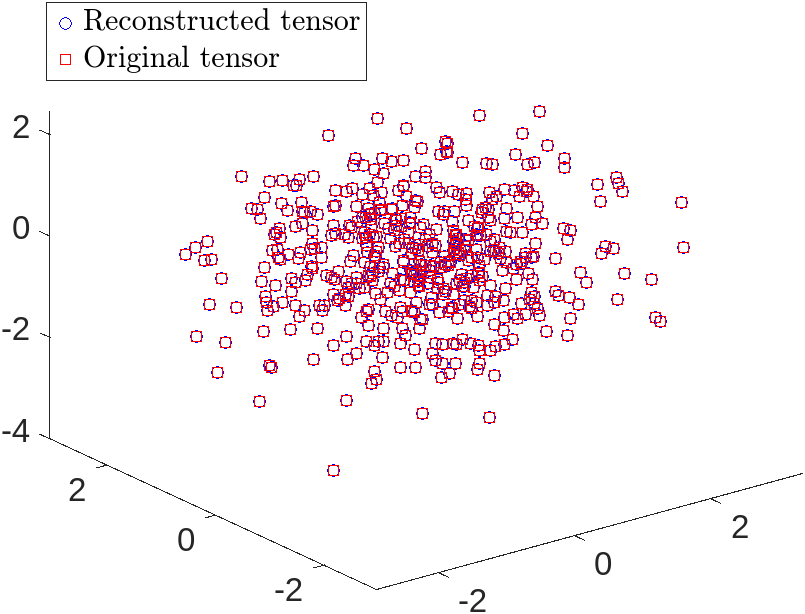}
    \caption{This figure shows the  point-wise gap between  
        the reconstructed signal and the ground truth signal, there are $20\times 15\times 5=1500$ sample points in total, we compared each point from the constructed tensor with the ground truth tensor.}
    \label{fig:point-wise recovery}
\end{figure}
 It is important to note that the product of \( T \) and the sampling rate \(\alpha\) provides a measure that reflects the overall sampling size.   If this product is less than 1, recovery is likely to fail, as the sample set lacks sufficient information to reconstruct the initial signal. Additionally, as observed in \Cref{fig:point-wise recovery}, a sampling rate of 40\% enables successful recovery of all points, further underscoring the robustness of our method.

\subsubsection{Optimal maximum sampling time T}
The parameter \( T \) is a critical hyperparameter, especially when samples are affected by noise. To investigate the effect of \( T \) on recovery performance, we conducted a series of experiments, varying \( T \) from 1 to 15 while keeping the sampling rate $\alpha$ fixed at 40\%. Additive Gaussian noises with mean 0 and variance \(\sigma^2\) were applied to the samples, i.e., \(\varepsilon \sim \mathcal{N}(0, \sigma^2)\).

\begin{figure*}[th]
    \centering
    \begin{subfigure}{0.32\textwidth}
        \centering
        \includegraphics[width=\textwidth]{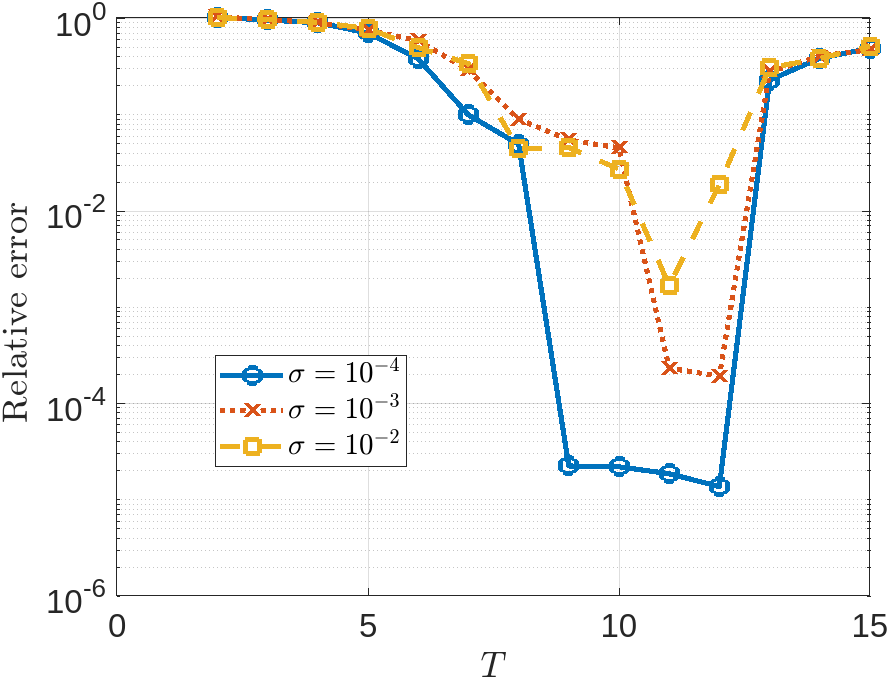} 
        \caption{$\alpha=0.2$}
    \end{subfigure}
    \begin{subfigure}{0.32\textwidth}
        \centering
        \includegraphics[width=\textwidth]{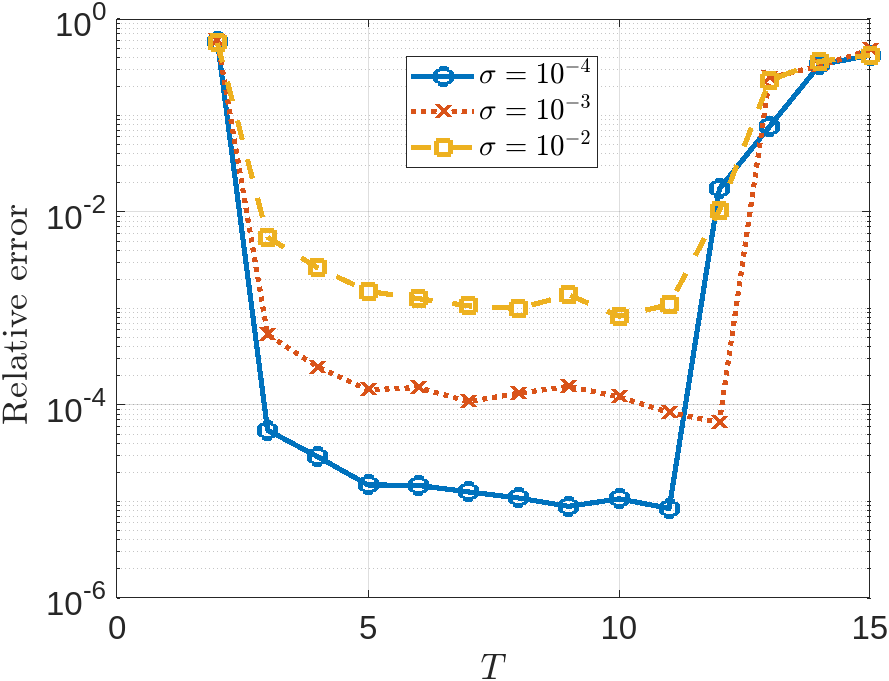} 
        \caption{$\alpha=0.5$}
    \end{subfigure}

    \caption{Relative error v.s. maximum sampling time  $T$   under different noise levels and sampling rates}
    \label{fig:Optimal T}
\end{figure*}
As shown in \Cref{fig:Optimal T}, increasing \( T \) does not always enhance recovery performance; rather, an optimal value of \( T \) exists. For \( T > 10 \), we calculate the condition number \(\kappa(j)\) of the matrix 
{\footnotesize
\[
\begin{bmatrix}
    (A_3(j)A_1(0))^\top & (A_3(j)A_1(1))^\top & \cdots & (A_3(j)A_1(T))^\top
\end{bmatrix}^\top
\]}
and define \( K = \max_{j} \kappa(j) \) as the condition number of the entire system. We observed that \( K \) became excessively large (on the order of \(10^{11}\)), which caused instability in the linear system \eqref{eq:optfre-eq} and led to a significant increase in relative error, as shown in \Cref{fig:condition number}. This indicates that selecting an appropriate value for \( T \) is crucial for achieving stable and accurate recovery.

 \begin{figure}[ht]
    \centering
     \includegraphics[width=0.6\linewidth]{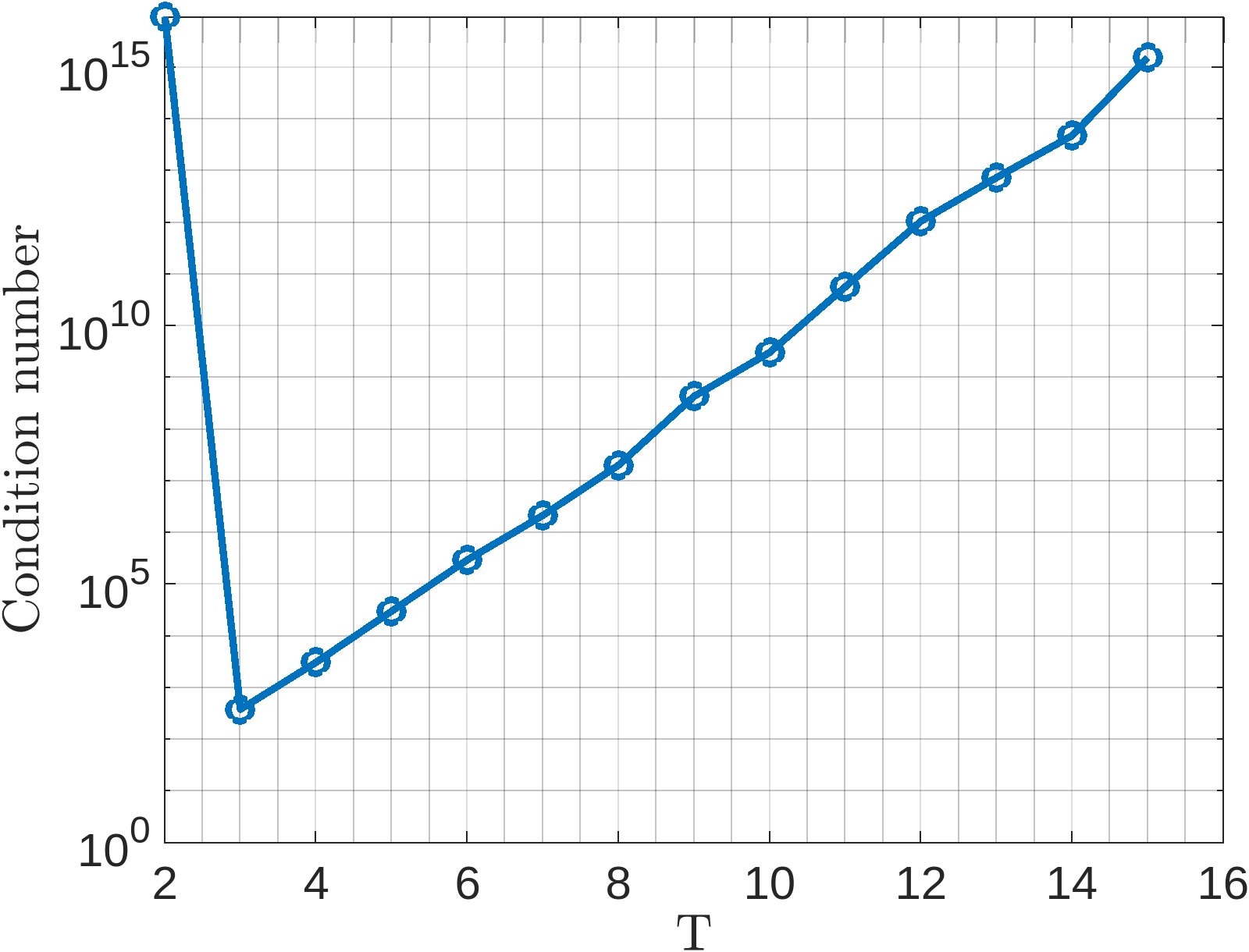}
     \caption{Condition number for different T}
     \label{fig:condition number}
 \end{figure} 
 \begin{figure}[ht]
    \centering
    \includegraphics[width=0.6\linewidth]{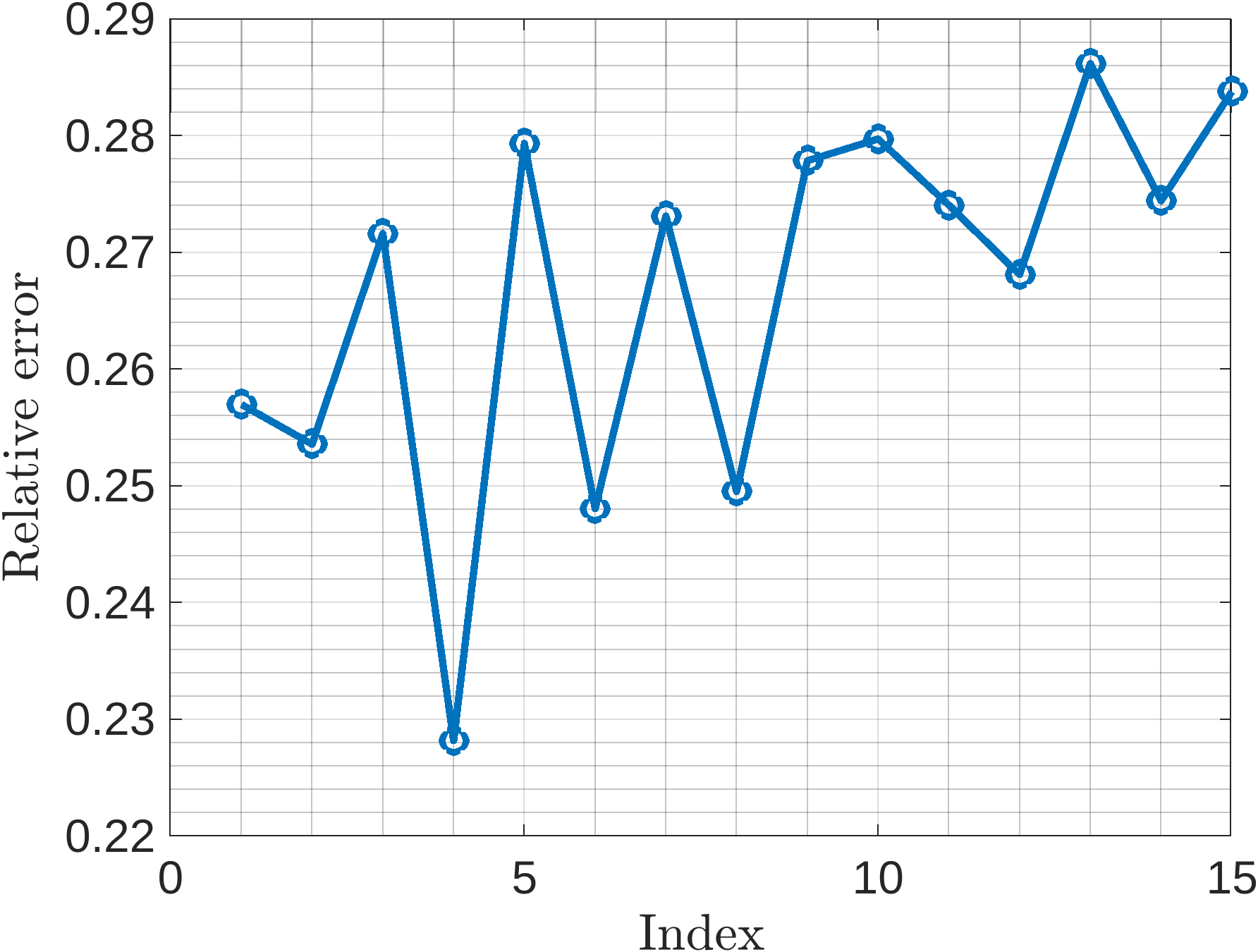 }
    \caption{Undersampled second dimension}
    \label{fig:verify conj1}

    \vspace{0.5cm} 

    \includegraphics[width=0.6\linewidth]{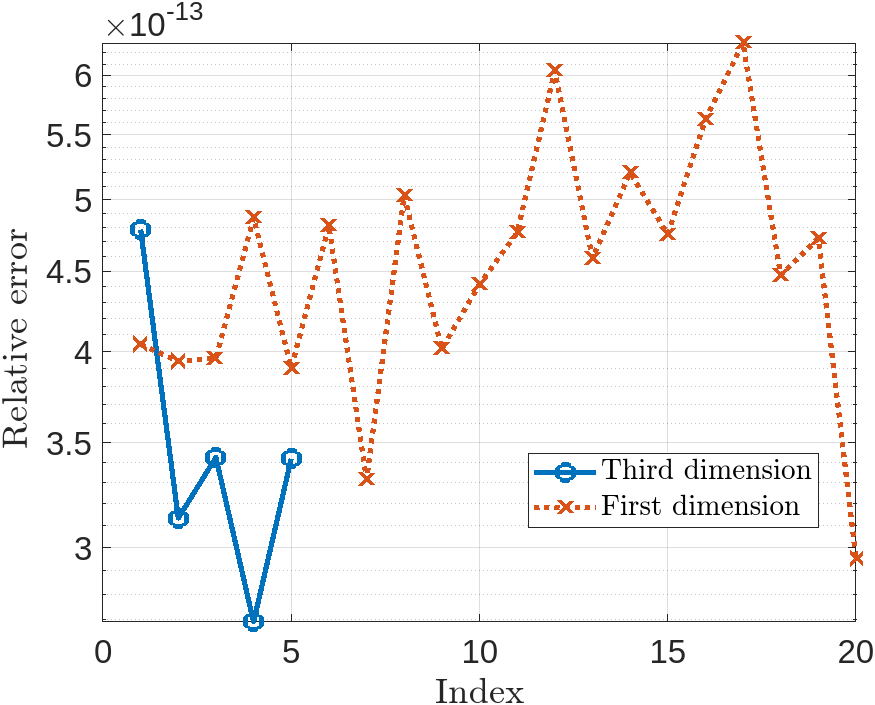} 
    \caption{Undersampled first and third dimension}
    \label{fig:verify conje}
\end{figure}




\subsubsection{Verifying the conjecture}
\label{sec 234}
In this section, we conducted simulations to evaluate \Cref{conj1}. The initial signal is a tensor of size \( 20 \times 15 \times 5 \). To test the conjecture, we first set the sampling rate to \( \alpha = 1 \), generating the sampling location set \( \Omega = \bigcup_{j=1}^{15} \Omega_{j} \), where \( \Omega_{j} \) represents the locations corresponding to the second index equal to \( j \). In each experiment, one \( \Omega_{j} \) was excluded by varying \( j \) from 1 to 15, resulting in the sampling location set \( \Omega^{j} = \Omega \setminus \Omega_{j} \). This configuration led to a sampling rate of 93\%. However, as shown in \Cref{fig:verify conj1}, the relative error remained above 0.2, indicating that the initial signal could not be fully recovered.

To further explore this, we generated a new sampling location set with \( \alpha = 0.5 \), and similarly adjusted the set by excluding all locations where the first (or third) index equals \( j \), with \( j \) varying from 1 to 20 (or from 1 to 5 for the third index). As shown in \Cref{fig:verify conje}, excluding locations where the first or third index takes these values did not affect the recovery, and the relative error reduced to \( 10^{-11} \).

\small

\bibliographystyle{plain}
\bibliography{ref}
\end{document}